\let\oldnl\nl
\newcommand{\nonl}{\renewcommand{\nl}{\let\nl\oldnl}}
\nonl \SetKwProg{Fn}{Procedure}{}{}
\newcommand{\netroot}[1]{\rho_{#1}}
\newtheorem{proposition}{Proposition}[section]
\newtheorem{corollary}{Corollary}[section]
\newtheorem{theorem}{Theorem}[section]
\title{\large Solving Tree Containment Problem for Reticulation-visible Networks with Optimal Running Time}
\author{\normalsize Andreas D.M. Gunawan\thanks{a0054645@u.nus.edu}}
\affil{\small Department of Mathematics, National University of Singapore, Singapore 119076}
\date{\vspace{-5ex}}
\begin{document}

\maketitle

\begin{abstract}
Tree containment problem is a fundamental problem in phylogenetic study, as it is used to verify a network model.
It asks  whether a given network contain a subtree that resembles a binary tree.
The problem is NP-complete in general, even in the class of binary network. 
Recently, it was proven to be solvable in cubic time, and later in quadratic time for the class of general reticulation visible networks. 
In this paper, we further improve the time complexity into linear time.
\end{abstract}

\section{Introduction}

A binary tree is often used to model evolutionary history. The internal nodes of such tree represent speciation events (i.e. the emerging of a new species), and the leaves represent existing species.
However, a binary tree cannot explain reticulation events such as hybridization and horizontal gene transfer \citep{Chan_13_PNAS, marcussen2014ancient}.
This motivates researcher to develop a more general model, which is called phylogenetic networks.
In a phylogenetic networks, internal nodes of indegree more than one represent reticulation events, while other internal nodes represent speciation events. 

As a result of their experiment, biologists often obtained a binary tree that best explain the evolution of the gene/protein \citep{delsuc2005phylogenomics, ma2013reconstruction}.
Tree containment problem (TCP) is a problem that arise from verifying a given phylogenetic network model with the experimentally-derived binary tree. 
It asks whether there is a subtree in the phylogenetic model that is consistent with the binary tree.
However, the TCP is known to be NP-complete, even on the restricted class of binary phylogenetic network \citep{Kanj_08_TCS}.

In order to make the phylogenetic network model practical, much effort has been devoted to obtain classes of networks that are reasonably big, on which the TCP can be solved quickly.
One of the biggest known such class is the reticulation-visible networks.
The TCP for reticulation-visible networks was independently proven to be cubic-time solvable by \citet{BordewichSemple2015} and \citet{GDZ2016}. 
It is further improved into quadratic time in \citep{GDZ2016_2}, which is the journal version of \citep{GDZ2016}.

A certain decomposition theorem was introduced in \citep{GDZ2016} to solve the TCP.
The same decomposition is also used to produce a program to solve TCP for general network \citep{bingxin2016} and to obtain efficient program for computing Robinson-Foulds distance (RFD) \citep{APBC2017}.
The decomposition theorem enables us to decompose a network into several components, which can then be dissolved into a single leaf one by one, in a bottom-up manner.

In this paper, we further analyse the structure of a lowest component in a reticulation-visible network, which allows us to give an optimal algorithm with linear running time.

\section{Basic definitions and notations}

A phylogenetic network (or simply network) is a directed acyclic graph with exactly one root (nodes of indegree zero), and nodes other than the root have either exactly one incoming branch or exactly one outgoing branch. Node of indegree one is called tree node, and otherwise it is called reticulation node (or simply reticulation). For simplicity, we add an incoming branch with open end to the root, thereby making it a tree node. The set of leaves (tree nodes of outdegee zero) are labeled bijectively with a set of taxon, and represent the existing species under consideration.

For a given network $N$, 
$\mathcal{V}(N)$ denotes its set of nodes, 
$\mathcal{E}(N)$ its set of edges, 
$\mathcal{T}(N)$ its set of tree nodes (including root and leaves), 
$\mathcal{R}(N)$ its set of reticulations, and
$\mathcal{L}(N)$ its set of leaves.
The root of $N$ is denoted with $\rho_N$.

An edge is a reticulation edge if its head is a reticulation, and otherwise the edge is a tree edge. A path is a tree path if every edge in the path is a tree edge.

Node $u$ is a parent of node $v$ (or $v$ is the child of $u$) if $(u,v)$ is an edge in $N$. 
Two nodes are sibling if they share a common parent. 
For a node $v$, $\mbox{pr}_N(v)$, $\mbox{ch}_N(v)$, and $\mbox{sb}_N(v)$ denote the set of nodes (or the unique node if the set is a singleton) that is the parent, children, and sibling of $v$ in $N$.
In a more general context, node $u$ is above node $v$ (or $v$ is below $u$) if there is a path from $u$ to $v$. In such case, we also say that $u$ is an ancestor of $v$ and $v$ a descendant of $u$.
We always consider a node as below and above itself.
For a node $v$, $N[v]$ is defined as the subnetwork of $N$ induced by the nodes below $v$ and edges between them.

A phylogenetic network is binary, if every leaf is of indegree one and outdegree zero, while every other node has total degree of three. A phylogenetic tree is a binary phylogenetic network that has no reticulation.

For a set of nodes $V$, $N - V$ is the network with node set $\mathcal{V}(N)\backslash V$ and edge set $\{(u,v) \in \mathcal{E}(N) : u,v \notin V\}$. 
For a set of edges $E$, $N - E$ is the network with the same node set as $N$ and edge set $\mathcal{E}(N) \backslash E$. If the set $V$ or $E$ above contain only a single element $x$, we simply write the resulting network as $N - x$.

\subsection{Visibility property}

A node $u$ is the stable ancestor of (or is stable on) a node $v$ if any path from the root to $v$ pass through $u$ at some point.
A node is stable if it is the stable ancestor of some labeled leaf in $N$, and otherwise the node is unstable.
A reticulation-visible network is a network where every reticulation is stable. 

The following two propositions was proved in \citep{linearNS} as Proposition 2.1 and in \citep{GunawanZhang2015} as Lemma 2.3, which gives an insight on the structure around stable nodes.
\begin{proposition} \label{prop1}
The following facts hold:
\begin{enumerate}
     \item The parent of a stable tree vertex is always stable.
     \item A reticulation is stable if and only if its child is a stable tree vertex.
     \item If $u,v$ are stable ancestors of $w$, then either $u$ is above $v$ or vice versa.
\end{enumerate}
\end{proposition}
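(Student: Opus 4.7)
The plan is to prove the three items in order, each time by inspecting which nodes must appear on every root-to-leaf path that witnesses stability.

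For~(1), I would let $v$ be a stable tree vertex, witnessed by some labelled leaf $\ell$, and let $u$ denote its unique parent (unique because $v$ has indegree one). Every path from $\rho_N$ to $\ell$ passes through $v$ by hypothesis; the only edge entering $v$ is $(u,v)$, so each such path must also traverse $u$, making $u$ stable on $\ell$.

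For~(2), the reticulation $r$ has outdegree one and therefore a unique child $c$. The ``if'' direction follows at once from~(1) applied to $c$. For the ``only if'' direction, I would first note that every root-to-$\ell$ path through $r$ must continue along $(r,c)$, so $c$ inherits stability on $\ell$. The remaining task, and what I expect to be the main obstacle of the whole proposition, is to rule out $c$ being a reticulation. My plan is to assume for contradiction that $c$ has a second parent $p \neq r$, and then to splice together some root-to-$p$ path, the edge $(p,c)$, and some $c$-to-$\ell$ path, obtaining a root-to-$\ell$ walk that bypasses $r$. The subtle point is ensuring the splice really avoids $r$: the acyclicity of $N$, together with the fact that $r$ has only one outgoing edge (namely $(r,c)$), should force any $r$-to-$p$ path to begin $r \to c \to \cdots$ and thus create a cycle with $(p,c)$, so no root-to-$p$ path can visit $r$; a symmetric argument applies to $c$-to-$\ell$ paths, since a visit to $r$ would make $c$ an ancestor of $r$ and again cycle with $(r,c)$. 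The spliced path then contradicts the stability of $r$ on $\ell$.

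For~(3), I would simply fix one path $P$ from $\rho_N$ to $w$. Since $u$ and $v$ are both stable ancestors of $w$, $P$ passes through both; whichever occurs first along $P$ is an ancestor of the other, witnessed by the sub-path of $P$ between the two occurrences.

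In summary, items~(1) and~(3) are essentially one-line consequences of the definition of stable ancestor applied to a single witnessing path, while the forward direction of~(2) is the only genuinely delicate step, requiring the acyclicity of $N$ and the outdegree-one property of reticulations to guarantee that the bypass path actually avoids $r$.
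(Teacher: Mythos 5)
Your proof is correct and complete. Note that the paper itself does not prove \Cref{prop1}; it defers to Proposition 2.1 of \citep{linearNS} and Lemma 2.3 of \citep{GunawanZhang2015}, so there is no in-paper argument to compare against. Your treatment of the one genuinely delicate point --- the forward direction of item~(2), where you must show the child $c$ of a stable reticulation $r$ cannot itself be a reticulation --- is sound: the acyclicity argument correctly forces any root-to-$p$ path (for a hypothetical second parent $p$ of $c$) and any $c$-to-$\ell$ path to avoid $r$, and the spliced walk is automatically a simple path in a DAG, yielding the desired contradiction with the stability of $r$. The only implicit ingredient worth making explicit is that in a finite acyclic digraph with a unique indegree-zero node, every node is reachable from the root, which guarantees the existence of the root-to-$p$ path you splice in item~(2) and of the witnessing path $P$ in item~(3).
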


\begin{proposition}
\label{unstable_condition}
Let $u$ be a node in $N$, and $R$ be a set of reticulations below $u$ such that for each reticulation $r$ in $R$, either (i) $r$ is below another reticulation $r' \in R$, or (ii) there is a path from $\netroot{N}$ to $r$ that avoids $u$. Then,
$u$ is not stable ancestor any leaf $\ell$ below a reticulation in $R$. 
\end{proposition}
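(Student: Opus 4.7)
The plan is to show that for every leaf $\ell$ below some reticulation in $R$, there is a path from $\rho_N$ to $\ell$ that avoids $u$; by definition this gives the failure of $u$ being a stable ancestor of $\ell$. To organize the construction I would fix a topological ordering $r_1, \ldots, r_k$ of $R$ in which whenever $r_j$ is strictly above $r_i$ in $N$ one has $j < i$; such an ordering exists because the ancestor relation restricted to $R$ is inherited from the acyclic graph $N$.

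The core claim, proved by induction on $i$, is that for each $r_i \in R$ there is a path in $N$ from $\rho_N$ to $r_i$ that avoids $u$. In the inductive step I would split on which hypothesis of the proposition applies to $r_i$. If (ii) holds, the required path is given directly. If only (i) holds, then $r_i$ is below some $r_j \in R$ with $j < i$; by the inductive hypothesis there is a path $\rho_N \leadsto r_j$ avoiding $u$, and concatenating it with any descending path $r_j \leadsto r_i$ in $N$ produces the desired path. The crucial observation is that no descending path from $r_j$ can pass through $u$: since $r_j \in R$ is below $u$ and, as argued below, $r_j \neq u$, the node $u$ is strictly above $r_j$, so if $u$ lay on a descending path from $r_j$ it would simultaneously be below $r_j$, contradicting acyclicity of $N$.

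Finally, given a leaf $\ell$ below some $r \in R$, I would concatenate the path $\rho_N \leadsto r$ produced by the claim with any descending path $r \leadsto \ell$; the same acyclicity argument shows that this extension does not visit $u$, so the resulting path from $\rho_N$ to $\ell$ avoids $u$. The main obstacle is handling the edge case that $u$ itself is a reticulation equal to some $r \in R$: condition (ii) for such an $r = u$ is vacuous because no path from $\rho_N$ to $u$ can avoid $u$, while condition (i) combined with finiteness of $R$ and acyclicity of $N$ forces an impossible cyclic chain of ancestors in $R$. Hence $u \notin R$, legitimizing the strict inequality $r_j \neq u$ used in the inductive step, and the argument above goes through cleanly.
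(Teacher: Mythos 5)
The paper does not prove this proposition itself --- it imports it from the cited earlier work as Proposition 2.1 of one reference and Lemma 2.3 of another --- so there is no in-paper argument to compare against; your proof is correct and self-contained. The induction along a topological order of $R$, the acyclicity argument showing that a descending path from a node strictly below $u$ cannot revisit $u$, and the explicit exclusion of the edge case $u \in R$ (condition (ii) being vacuous for $r=u$ and condition (i) forcing a cycle) together constitute exactly the standard argument for this fact, with no gaps.
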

\noindent \Cref{unstable_condition} implies the following.
\begin{corollary}
A tree node with two reticulation children is unstable.
\end{corollary}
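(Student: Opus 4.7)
The plan is to invoke \Cref{unstable_condition} with $R = \{r_1, r_2\}$, where $r_1, r_2$ denote the two reticulation children of our tree node $u$. Because $u$'s children are exactly $r_1$ and $r_2$, every leaf descendant of $u$ lies below one of the $r_i$, so once the hypothesis of the proposition is verified, it immediately follows that $u$ is not the stable ancestor of any leaf, which is exactly the corollary.

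The substantive task is to check, for each $i \in \{1,2\}$, that either (i) $r_i$ lies below the other reticulation, or (ii) there is a path from $\netroot{N}$ to $r_i$ avoiding $u$. I would do this via the following key claim: \emph{if $u$ is the stable ancestor of $r_i$, then $r_i$ lies below $r_{3-i}$.}

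To prove the claim (for $i=1$, say), pick any parent $p \neq u$ of $r_1$; such a $p$ exists since $r_1$ is a reticulation. Any path from $\netroot{N}$ to $p$ extends via the edge $(p,r_1)$ to a path from $\netroot{N}$ to $r_1$, which by stability must pass through $u$. Since $u$ equals neither $p$ nor $r_1$, the visit to $u$ must occur strictly before $p$, forcing $p$ to be a descendant of $u$. Consider then a path from $u$ down to $p$: it begins at some child of $u$, and it cannot begin with $r_1$, since otherwise $u \to r_1 \to \cdots \to p \to r_1$ would be a directed cycle. Hence it must begin with $r_2$, placing $p$, and therefore $r_1$, below $r_2$.

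With the claim in hand, $r_1$ and $r_2$ cannot simultaneously lie below each other in a DAG, so at least one of them --- say $r_1$ --- is not stable under $u$, yielding condition (ii) for $r_1$. For $r_2$: either (ii) also holds, or $u$ is stable on $r_2$ and the claim delivers (i). Either way, \Cref{unstable_condition} applies. The only genuinely delicate step is the short cycle argument inside the key claim; the rest is routine bookkeeping with the definitions of \emph{below} and \emph{stable}.
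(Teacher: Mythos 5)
Your proof is correct and takes the route the paper intends: the paper simply asserts that \Cref{unstable_condition} implies the corollary, and your argument supplies exactly the verification of its hypothesis for $R=\{r_1,r_2\}$, with the acyclicity argument showing that a second parent of $r_i$ lying below $u$ must in fact lie below $r_{3-i}$. Nothing further is needed.
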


It is also worth noting that the number of nodes in a reticulation-visible network is bounded by the number of leaves. If $N$ is a binary reticulation-visible network with $n$ leaves, then there are at most $3(n-1)$ reticulations in $N$ (see \citep{BordewichSemple2015,GunawanZhang2015} and *cite nearly stable paper*). This implies that $|\mathcal{V}(N)|$ and $|\mathcal{E}(N)|$ are both $O(n)$ in size, which allows us to bound time complexity of algorithms in the number of leaves.

\begin{figure}
    \centering
    \includegraphics[width = 0.9\textwidth]{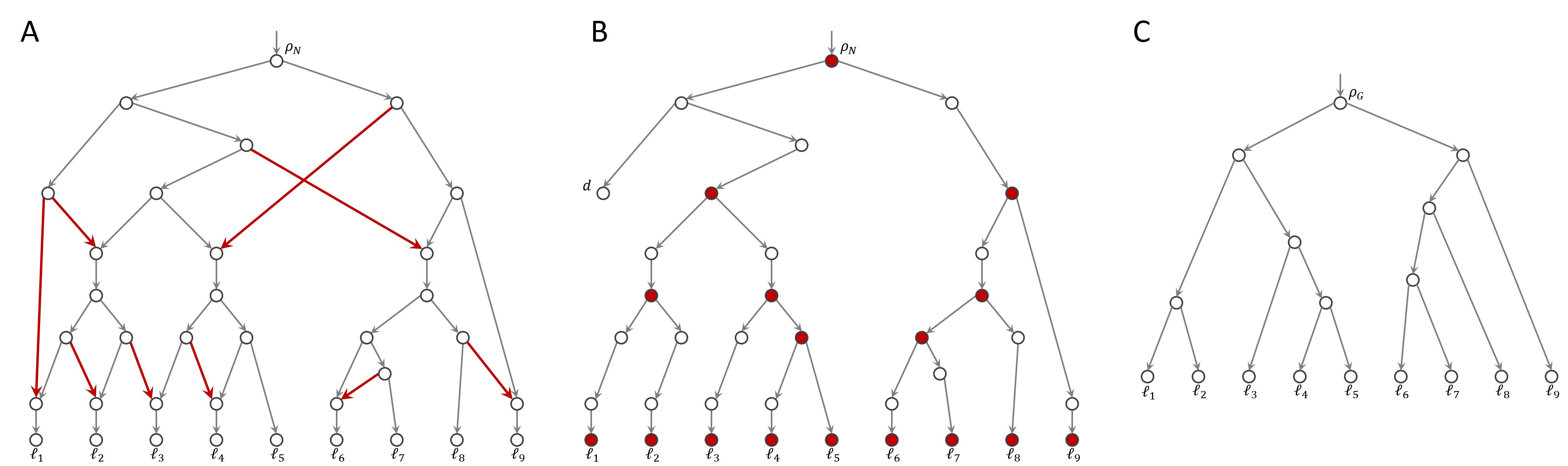}
    \caption{(\textbf{A}) An example of a reticulation-visible network with nine reticulations and nine leaves. 
    (\textbf{B}) A spanning subtree obtained by removing the highlighted reticulation branches in (A). After the edge removal, node $d$ becomes a dummy leaf. Highlighted nodes are the nodes preserved if we repeatedly contract incoming edges to dummy leaves and nodes of indegree and outdegree one. Note that there is a node of total degree three between $d$ and $\rho_N$ which need to be removed.
    (\textbf{C}) The resulting binary tree after edge contractions, which implies that this tree is displayed in the network in (A). }
    \label{Fig_TCP_example}
\end{figure}

\subsection{Tree containment problem}

Let $N$ be a binary phylogenetic network, and let $T$ be a subtree of $N$ containing $\rho_N$ and $\mathcal{L}(N)$. 
Contracting an edge $(u,v)$ from $T$ means we remove node $v$ and all edges incident to it, and modify the neighbourhood of $u$ as follows:
\begin{enumerate}
     \item For every edge $(w,v)$ that is removed, we add the edge $(w,u)$ if it doesn't exist; and
    \item For every edge $(v,x)$ that is removed, we add the edge $(u,x)$ if it doesn't exist.
\end{enumerate}
A leaf in $T$ that is an internal node in $N$ is not labeled with any taxa, and such leaf is called a dummy leaf. 
Contracting the incoming edge of a dummy leaf simply means we remove the dummy leaf.
The tree $T$ is said to be a subdivision of a phylogenetic tree $G$ if we can obtain $G$ by repeatedly contract incoming edges of dummy leaves and nodes of indegree and outdegree one from $T$ until there is no such nodes anymore.

The tree containment problem (TCP) for a network $N$ and a phylogenetic tree $G$ over the same set of taxa, is asking whether there exists a (spanning) subtree $T$ of $N$ that is a subdivision of $G$.

\subsection{Comparing two trees}

Let $T$ be a tree, and let $L\subseteq \mathcal{L}(T)$. We can pre-process the tree $T$ in $O(|\mathcal{V}(T)|)$, such that upon any given set of leaves $L$,
we can obtain in $O(|L|)$ time a binary tree $T'$ that is a subdivision of a subtree of $T$, such that $T'$ has leaf set $L \cap \mathcal{L}(T)$ (For example, see Section 8 of \citep{cole2000n}). 

The following subroutine algorithm to check whether a tree $T$ contains another tree $G$ is often used for the rest of the paper.
\begin{center}
\footnotesize
\begin{tabular}{l}
\hline
{\sc IsSubtree($T$, $G$) }\\
\\
{\bf Input}: A tree $T$ and a binary tree $G$\\
{\bf Output}: "YES" if there is a subtree of $T$ that displays $G$ and "NO" otherwise\\
\\
1. Traverse $T$ to find the set of leaves $\mathcal{L}(T)$;\\
2. Traverse $G$ to find the set of leaves $\mathcal{L}(G)$,\\
\hspace*{1em} but terminate and return "NO" if we visit more than $|\mathcal{V}(T)|$ nodes;\\
3. If ($\mathcal{L}(G) \not \subseteq \mathcal{L}(T)$) \{ return "NO"\};\\
4.  Else \{Find a binary subtree $T'$ satisfying $\mathcal{L}(T') = \mathcal{L}(G)$,\\
\hspace*{2em} such that there is a subtree of $T$ that is a subdivision of $T'$ as in \citep{cole2000n}.\}\\
5. If $T'$ is isomorphic with $G$, return "YES"; else return "NO".
\\
\hline

\end{tabular}
\end{center}

\section{A decomposition theorem}

For the rest of the paper, we assume that $N$ is a binary reticulation-visible network.

The following subsection relies on a decomposition theorem that was established in \citep{GDZ2016_2}. Readers who are interested for the complete proof of the discussion should refer to the paper.

Removing every reticulation from $N$ generates a forest $N - \mathcal{R}(N)$, where every node in the forest must be a tree node in $N$. 
Each maximal connected component in the forest consists of tree nodes in $N$, and is called a tree component of $N$.
Let $C_0, C_1, C_2, \ldots, C_q$ denote the tree components of $N$, where $C_0$ denotes the special tree component rooted at $\rho_N$.

Let $\rho_i$ denote the root of tree component $C_i$ for all $i = 1, 2, \ldots, q$, and set $\rho_0 = \rho_N$ for convenience. 
A tree component root is either the network root $\rho_N$ or the child of a reticulation.
As $N$ is reticulation-visible, a tree component root is always stable according to \Cref{prop1}.
A tree component is big if it contains at least two nodes, and otherwise it is called a leaf component.
In a reticulation-visible network, a tree component is either a leaf component or a big tree component.

A tree component $C_j$ is below another component $C_i$ if  $\rho_j$ is below $\rho_i$.
We can order the component roots such that $\rho_j$ is below $\rho_i$ only if $i < j$, for instance via breadth-first search on $N$. 

Let $j$ be the largest index such that $C_j$ is a big tree component. 
Every tree component below $C_j$ are simply leaf component, and therefore $C_j$ is called the lowest big tree component in $N$. 
Every leaf of $N$ below $\rho_j$ is either included in $C_j$ or is the child of a reticulation $r$, where $r$ has at least one parent in $C_j$.

Let $s$ be a node in the lowest component $C_j$.
We classify the leaves below $s$ into three types.
A leaf is of type-1 (with respect to $s$) if there is a tree path from $s$ to the leaf.
Leaf whose parent is a reticulation are called type-2 if both parents of the reticulation are below $s$, and otherwise it is type-3 leaf.
It is not hard to see that $s$ is stable only on type-1 and type-2 leaves.
Let $L_1(s), L_2(s), $ and $L_3(s)$ denote the set of leaves of type-1, 2, and 3 with respect to $s$.

\begin{figure}
    \centering
    \includegraphics[width = 0.4\textwidth]{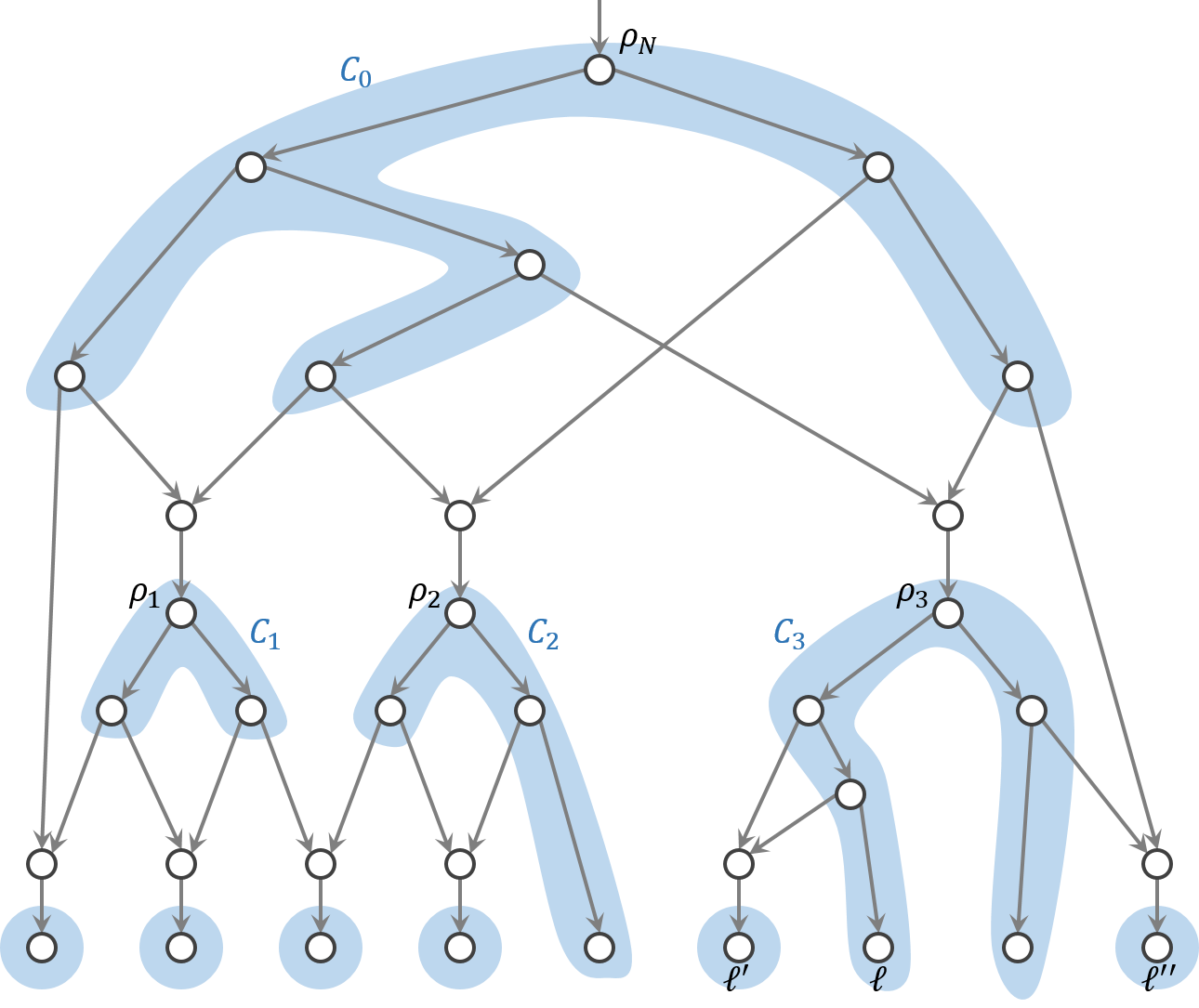}
    \caption{The tree components of a reticulation visible network. There are four big tree components, namely $C_0$, $C_1$, $C_2$, and $C_3$, and six other leaf components. The leaves $\ell, \ell', $ and $\ell''$ are respectively of type-1, 2, and 3 with respect to $\rho_3$}
    \label{tree_decomp}
\end{figure}

\subsection*{Overview of the linear-time algorithm}

By using the decomposition theorem, we can use divide-and-conquer approach to solve the tree containment problem.
First, we pre-process the input network $N$ to decompose it into its tree components. 
We then observe the lowest tree component, dissolve it into a single leaf, and recurse on the next lowest tree component. 

Dissolving the lowest component efficiently is not trivial.
In the next section, we show that there is a set $S$ of stable nodes in $C_j$, such that if a node $s \in S$ is a lowest node in $S$, then the subnetwork of $N$ below the children of $s$ are merely two trees. 
By utilizing the fact that checking whether a tree is inside another tree is easy, we can cut several reticulation branches below $s$, contract $N[s]$ into a single leaf, and repeat this for all nodes in $S$ to eventually the lowest tree component is dissolved into a leaf component.

\section{Node with special properties and the structure below it}

Here, let $C_j$ be a lowest tree component in $N$, and suppose $s$ is a node in $C_j$ satisfying the following properties:
\begin{itemize}
    \item[I.] $s$ is a stable tree vertex;
    \item[II.] $s$ has two children, namely $s'$ and $s''$; and
    \item[III.] $N[s']$ and $N[s'']$ are both trees.
\end{itemize}
We will prove that we can contract $N[s]$ (probably along with a subtree of $G$) into a single leaf, such that the resulting network displays the resulting binary tree if and only if $N$ displays $G$.

As $s$ is a stable tree vertex, it is stable on either a leaf in $L_1(s)$ or in $L_2(s)$. 
We further consider three possible cases for $s$.

\subsection{Case C1: there are two edge-disjoint paths from $s$ to two leaves}

As the paths are edge-disjoint, one must pass through $s'$ and ends at a leaf $\ell'$, while the other pass through $s''$ and ends at $\ell''$.
Let $t$ be the lowest common ancestor of $\ell'$ and $\ell''$ in $G$, and let $t', t''$ be the children of $t$ on the path from $t$ to $\ell'$ and $\ell''$ in $G$, respectively. 

\begin{proposition} \label{prop_c1}
If $C1$ holds, then $N$ displays $G$ if and only if the followings hold.
\begin{itemize}
    \item[(i)]  $N[s']$ displays $G[t']$ and $N[s'']$ displays $G[t'']$.    
    \item[(ii)] $L_1(s) \cup L_2(s) \subseteq \mathcal{L}(G[t]) \subseteq \mathcal{L}(N[s])$
    \item[(iii)] Let 
    \begin{equation}
    \label{eqn_X}    
    X = \mathcal{V}(N[s]) \backslash \{\ell ,\mbox{pr}_N(\ell): \ell \in \mathcal{L}(N[s])\backslash \mathcal{L}(G[t])\}.
    \end{equation}
    If $N' = N - (X \backslash \{s\}) $ and $G' = G - (\mathcal{V}(G[t]) \backslash \{t\})$ and we label $s$ of $N'$ and $t$ of $G'$ with a new taxa, then $N'$ displays $G'$.
\end{itemize}
\end{proposition}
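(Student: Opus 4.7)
The plan is to prove the two directions of the equivalence separately. The forward direction takes a spanning subtree $T$ of $N$ that is a subdivision of $G$ and reads off (i)--(iii), while the backward direction glues together subdivisions witnessing (i) and (iii) into a single subdivision of $G$ in $N$.

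For the forward direction, suppose that $N$ displays $G$ via a subdivision $T\subseteq N$. Since $T$ is a subdivision of $G$, every node of $G$ corresponds to a unique branching node of $T$; let $v^{\ast}$ be the node of $T$ corresponding to $t$, so that $v^{\ast}$ is the $T$-LCA of $\ell'$ and $\ell''$. The central claim is that $v^{\ast}=s$. I would establish this by combining two observations. First, by \Cref{prop1} and the hypothesis that $s$ is a stable tree vertex, every path in $N$ from $\netroot{N}$ to any leaf in $L_1(s)\cup L_2(s)$ passes through $s$, so in $T$ every such leaf is a descendant of $s$. Second, Case C1 gives edge-disjoint $s$-to-$\ell'$ and $s$-to-$\ell''$ paths in $N$, and because $N[s']$ and $N[s'']$ are trees these are essentially the only routes inside $\mathcal{V}(N[s])$ from $s$ to the two leaves. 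A $v^{\ast}$ strictly above $s$ would have to route one outgoing branch in $T$ back down into $N[s]$ through the unique tree-edge entering $s$ and so force the two edge-disjoint $T$-paths to share that edge; a $v^{\ast}$ strictly below $s$ would have to lie inside one of $N[s']$ or $N[s'']$ and could not reach the leaf on the opposite side via two edge-disjoint $T$-paths. Either alternative contradicts the edge-disjointness of the two children-subtrees of $v^{\ast}$ in $T$, yielding $v^{\ast}=s$.

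Once $v^{\ast}=s$ is established, the three conclusions follow by restriction. For (i), the subtree of $T$ rooted at $s'$ (respectively $s''$) is a subdivision of $G[t']$ (respectively $G[t'']$), and it lies inside $N[s']$ (respectively $N[s'']$). For (ii), the first inclusion uses the stability argument above together with the fact that descendants of $v^{\ast}=s$ in $T$ correspond to descendants of $t$ in $G$; the second inclusion is $\mathcal{L}(G[t])\subseteq\mathcal{L}(T[s])\subseteq\mathcal{L}(N[s])$. For (iii), I would verify that the set $X$ in \eqref{eqn_X} is engineered so that removing $X\setminus\{s\}$ keeps $s$ together with exactly the type-3 leaves (and their reticulation parents) whose external in-edges reach above $s$; the subtree $T$ with $T\cap N[s]$ collapsed to the single leaf $s$ labelled by the new taxon is then a subdivision of $G'$ in $N'$.

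For the backward direction, I assemble a witness subdivision $T$ from (i) and (iii). Let $T_1$ and $T_2$ be subdivisions of $G[t']$ in $N[s']$ and of $G[t'']$ in $N[s'']$ given by (i), and let $T'$ be a subdivision of $G'$ in $N'$ given by (iii). Because $s$ is a leaf of $T'$ labelled by the new taxon and because $(s,s')$ and $(s,s'')$ are edges of $N$, I would splice the three pieces at $s$: unlabel the leaf $s$ of $T'$, add the edges $(s,s')$ and $(s,s'')$, and attach $T_1$ below $s'$ and $T_2$ below $s''$. The kept type-3 leaves and their reticulation parents already appear in $T'$ with the required external incoming edges, so the splicing introduces no edges inside $N[s]$ that are absent from $N$. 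A final check confirms that the resulting spanning subtree uses only edges of $N$ and contracts exactly to $G$, witnessing that $N$ displays $G$.

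The main obstacle, by a clear margin, is the identification $v^{\ast}=s$ in the forward direction. Both the edge-disjointness from Case C1 and the tree property of $N[s']$, $N[s'']$ are used in an essential way, and the argument has to rule out the possibility that $T$ reaches $\ell'$ or $\ell''$ via type-3 reticulations whose kept in-edge enters $N[s]$ from outside, which would otherwise offer alternative routes inconsistent with $v^{\ast}=s$. The secondary bookkeeping task is to verify in the backward direction that the splicing is consistent with the edges removed and retained by the definition of $X$, which is essentially a matter of translating \eqref{eqn_X} correctly.
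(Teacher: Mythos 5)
Your proposal is correct and follows essentially the same route as the paper: the necessity direction hinges on identifying the $T$-node corresponding to $t$ with $s$ (the paper does this by noting the two edge-disjoint paths of C1 are \emph{tree} paths, hence forced into any subdivision, making $s$ the $T$-LCA of $\ell'$ and $\ell''$ — which also disposes of your worry about alternative routes through type-3 reticulations), and the sufficiency direction glues $T_1$, $T_2$, and the witness for (iii) at $s$ exactly as you describe. No substantive difference in approach.
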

\begin{proof}

We first prove the sufficiency, and assume that conditions (i), (ii), and (iii) holds.
According to (i), there are subtrees $T_1$ and $T_2$ of $N[s']$ and $N[s'']$ that are subdivisions of $G[t']$ and $G[t'']$, respectively. 
The set $X$ in \Cref{eqn_X} represent the nodes in $N[s]$, excluding leaves that are not below $t$ in $G$ and their reticulation parents.
Assumption (2) implies that the leaves excluded from $X$ are in $L_3(s)$, therefore ensuring that they are reachable from $\rho_N$ in $N'$.
Finally, assumption (3) implies there is a subtree $T_3$ of $N'$ that displays $G'$. Combining $T_1$, $T_2$, $T_3$, and the edges $\{ (s,s'), (s,s'') \}$ yields a tree $T$ of $N$  that is a subdivision of $G$.

Next, we prove the necessity. Assume that $N$ displays $G$, so there is a subtree $T$ of $N$ that is a subdivision of $G$. We can further assume that $T$ does not contain any dummy leaf. Note that any stable node (in particular $s$) must be in $T$, as otherwise there is a leaf of $N$ not reachable from the root of $T$.

To prove (i), we consider $T[s]$. The two tree edge-disjoint paths from $s$ to $\ell'$ and $\ell''$ in $N[s]$ must be included in $T[s]$ because every path from $\rho_N$ to $\ell'$ (resp. $\ell''$) must contain the path from $s$ to $\ell'$ (resp. $\ell''$). Therefore, $s$ is the lowest common ancestor of $\ell'$ and $\ell''$ in $T$, and $T[s]$ must be a subdivision of $G[t]$. 
$T[s']$ must contain the leaf $\ell'$, while $T[s'']$ must contain the leaf $\ell''$. Therefore $T[s']$ (resp. $T[s'']$) is a subdivision of $G[t']$ (resp. $G[t'']$).

Condition (ii) is an immediate result from (i); $s$ is a stable ancestor of every leaf in $L_1(s) \cup L_2(s)$, and so $T[s]$ must contain all of these leaves. As $T[s]$ is a subdivision of $G[t]$, then the left part of the equation holds. The right part of the equation holds simply because $T[s] \subseteq N[s]$.

To prove (iii), we consider $T' = T - (\mathcal{V}(T[s]) \backslash \{s\})$. 
The nodes in $\mathcal{V}(T[s])$ is a subset of $X$ because of the fact that $T[s]$ is a subdivision of $G[t]$ (therefore does not contain a leaf $\ell^* \notin \mathcal{L}(G[t])$) and the assumption that $T$ does not contain any dummy leaf (therefore does not contain $\mbox{pr}_N(\ell^*)$).
Hence $T'$ is a subtree of $N'$, and is the evidence that $N'$ displays $G'$.
\end{proof}

Therefore, we can use the following subroutine algorithm to dissolve $N[s]$ whenever case C1 is found. From now on, $lca_T(u, v)$ denotes the lowest common ancestor of nodes $u$ and $v$ in a tree $T$.

\begin{center}
\footnotesize

\begin{tabular}{l}
\hline
{\sc [$N'$, $G'$] = Dissolve\_C1 ($N$, $s$, $G$) }\\
\\
{\bf Input}: A binary phylogenetic tree $G$ and a reticulation-visible network $N$. \\
\hspace*{1em}  The node $s$ in $N$ satisfies properties I, II, III, and C1.\\
 {\bf Output}: "NO" if $N$ does not display $G$, otherwise output $N'$ and $G'$ as in\\
 \hspace*{1em}\Cref{prop_c1}.\\
\\

1. Set $s'$ and $s''$ to be the children of $s$;\\
\hspace*{1em} Find leaves $\ell', \ell''$ so there are tree paths from $s'$ to $\ell'$ and from $s''$ to $\ell''$;\\
2. Set $t = \mbox{lca}_G(\ell', \ell'')$;\\
\hspace*{1em} Set $t'$ to be the child of $t$ that is an ancestor of $\ell'$ and $t''$ to be the sibling of $t'$.\\
3. If ($L_1(s) \cup L_2(s) \not \subseteq \mathcal{L}(G[t])$ or  $\mathcal{L}(G[t])\not \subseteq \mathcal{L}(N[s])$) \{stop and return "NO"\}\\
4. Check whether $N[s']$ displays $G[t']$ as follows:\\
\hspace*{1em} If ({\sc IsSubtree($N[s'], G[t']$)} returns no) \{stop and return "NO" \}\\
\hspace*{1em} Else \{ \\
\hspace*{2.5em} Set $X' = \mathcal{V}(N[s']) \backslash \{\ell ,\mbox{pr}_N(\ell) : \ell \in \mathcal{L}(N[s'])\backslash \mathcal{L}(G[t'])\}$;\\
\hspace*{2.5em} Set $N = N - X'$; \}\\
5. Check whether $N[s'']$ displays $G[t'']$;  ~~\%Similar as step 4, with $s'', t''$ replacing $s',t'$\\
6. Set $G = G - (\mathcal{V}(G[t]) \backslash \{t\})$;\\
\hspace*{1em} Label $s$ in $N$ and $t$ in $G$ with new taxa,\\
\hspace*{1em} Output the resulting network as $N'$ as $G'$.

\\
\hline

\end{tabular}
\end{center}

The correctness of the algorithm follows from \Cref{prop_c1}. 
Note that the set $X'$ from step 4 comprises of nodes in $N[s']$ other than leaves that does not belong to $G[t']$ and their parent. In step 5, another set $X''$ is defined, and the two sets satisfy $X' \cup X'' = X \backslash \{s\}$, where $X$ is the set defined in \Cref{eqn_X}.

\subsection{Case C2: there is a tree path from \texorpdfstring{$s$}{Lg} to a leaf}

We remark first that case C1 is a special case of C2. The obstacle in case C2 is that it is not easy to pinpoint the node $t$ in $G$ that should correspond to $s$, because now we only have one tree path from $s$ to a leaf.

Assume that there is a tree path from $s$ to a leaf $\ell$. 
We set $u_1 = \ell$ in $N$, and recursively define $u_{i+1} = \mbox{pr}_N(u_{i})$ until it reaches $u_{k+1} = s$.
We also set $v_1 = \ell$ in $G$, and recursively define $v_{i+1} = \mbox{pr}_G(v_{i})$ whenever needed.
We further define $u_i'$ (resp. $v_i'$) to denote the sibling of $u_i$ (resp. $v_i$) whenever possible.
Let $j_1 = 1$, and recursively define $j_i, i>1$  to be the smallest integer satisfying $j_{i} > j_{i-1}$ and $N[u_{j_i-1}']$ displays $G[v_{i-1}']$ (or, equivalently, $N[u_{j_i}]$ displays $G[v_i]$) whenever possible. Let $l$ be the highest index such that $j_l$ is defined, and set $t = v_l$.

Intuitively, this means that we greedily choose a 'lowest' subtree in $N$ for constructing a subdivision of $G$ whenever possible.
Similar as in case C1, $N$ displays $G$ if and only if $N[s]$ displays $G[t]$ and $N - (X\backslash \{s\})$ displays $G- (X\backslash \{s\})$. 
An illustration for the node labeling and the process for finding the indexes $j_i$s are shown in \Cref{case_C2}.

\begin{figure}
    \centering
    \includegraphics[width = \textwidth]{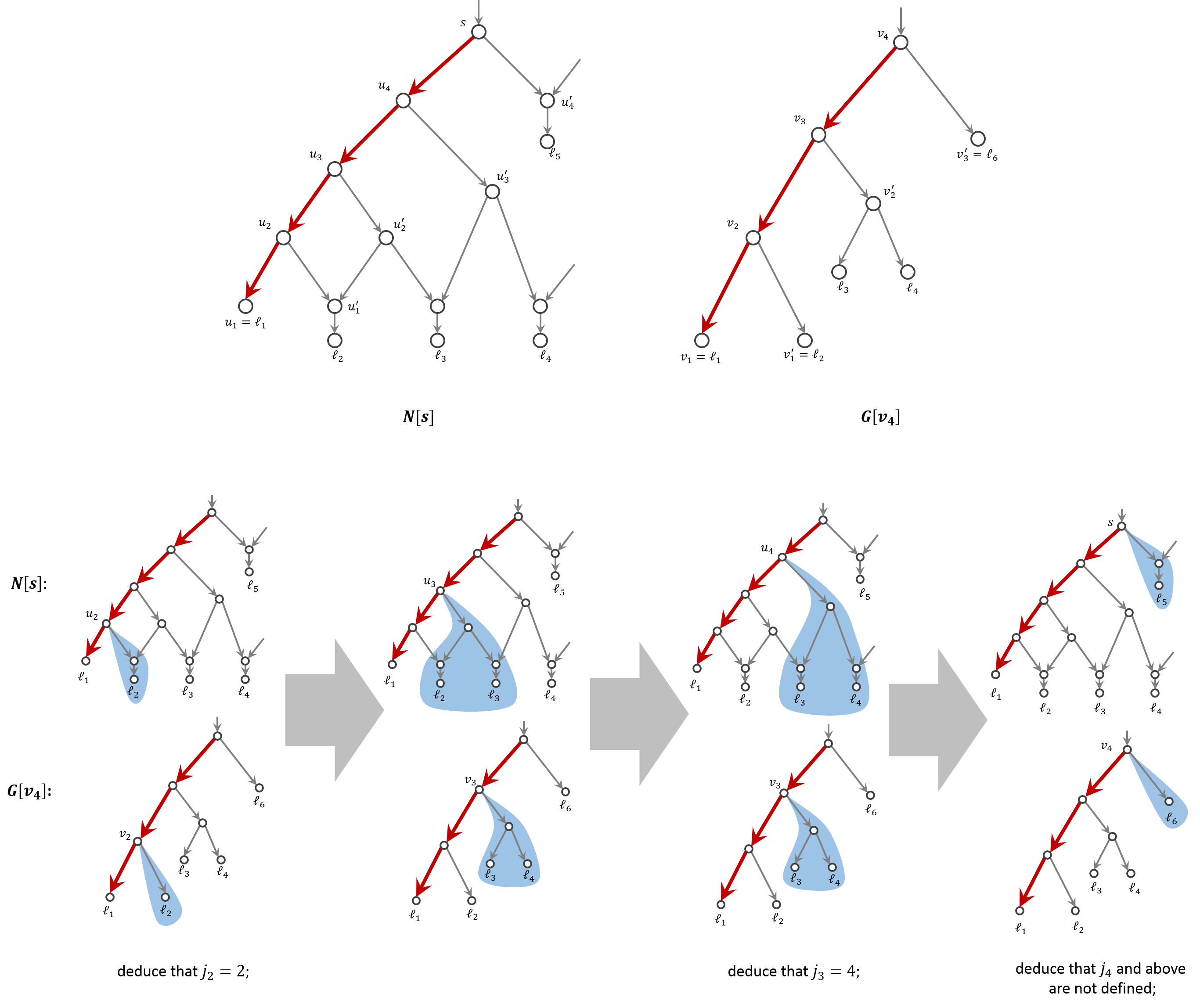}
    \caption{
    \textbf{Top}: an illustration for the node labeling in case C2, where we choose $u_1 = \ell_1$ in $N$ as the leaf to which there is a tree path from $s$. $u_{i+1}$ and $u_i'$ is defined as the parent and sibling of $u_{i}$, respectively. The indexes $v_i$ are defined in a similar manned in $T$.
    \textbf{Bottom}: a series of illustration showing the process of finding the indexes $j_i$s.  Initially, set $j_1 = 1$. We then compare the subnetwork of $N$ branching off $u_2$ with the subtree of $T$ branching off $v_2$. As the subtree is displayed in the subnetwork, we deduce that $j_2 = 2$.
    Next, we compare the subnetwork of $N$ branching off $u_3$ with the subtree of $T$ branching off $v_3$. As the subtree is not displayed, we moved on by comparing the subtree with the subnetwork of $N$ branching off $u_4$, and we can deduce that $j_3 = 4$ as $N[u_3']$ displays $G[v_2']$. 
    Finally, $j_4$ is not defined (because the leaf $\ell_6$ is not found in $N[s]$), and therefore we set $t = v_3$.}
    \label{case_C2}
\end{figure}

We formally state this in the following proposition.

\begin{proposition}\label{prop_c2}
 $N$ displays $G$ if and only if the following holds:
 \begin{itemize}
     \item[(i)] $N[u_{j_i}']$ displays $G[v_i']$ for every $i = 1,2, \ldots, l$,
     \item[(ii)] $L_1(u_{j_i}) \cup L_2(u_{j_i}) \subseteq \mathcal{L}(G[v_i]) \subseteq \mathcal{L}(N[u_{j_i}])$ for every $i = 1,2, \ldots, l$, and
     \item[(iii)] Let $X$ be defined as in \Cref{eqn_X}.
     If $N' = N - (X \backslash \{s\})$ and $G' = G - (X \backslash \{t\})$ and we label $s$ of $N'$ and $t$ of $G'$ with a new taxa, then $N'$ displays $G'$.
 \end{itemize}
\end{proposition}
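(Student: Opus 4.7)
My approach follows the structure of the proof of Proposition~\ref{prop_c1}, adapted to the recursively defined index sequence. The key structural fact I would use first is that the tree path $s=u_{k+1},u_k,\dots,u_1=\ell$ is forced into every subtree $T$ of $N$ that displays $G$: the leaf $\ell$ lies in any such $T$, and since each $u_i$ ($i\le k$) is a tree node with unique parent $u_{i+1}$ in $N$, an upward induction drags the entire path and its edges into $T$. This holds regardless of whether $N[s]$ is itself a tree.

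For sufficiency, assume (i)--(iii). Reading (i) in its defining form $N[u_{j_h-1}']$ displays $G[v_{h-1}']$, I would extract for each $h=2,\dots,l$ a subtree $\widehat T_h\subseteq N[u_{j_h-1}']$ that subdivides $G[v_{h-1}']$. I then assemble $T[s]$ by walking the tree path from $u_{j_l}$ down to $\ell$ and attaching at each branching node $u_{j_h}$ the sibling edge $(u_{j_h},u_{j_h-1}')$ together with $\widehat T_h$. The intermediate tree nodes $u_i$ with $j_{h-1}<i<j_h$ become indegree-1/outdegree-1 in $T[s]$ and collapse under the subdivision operation, so a short induction on $h$ shows $T[s]$ subdivides $G[v_l]=G[t]$. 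Condition (ii) guarantees that no leaf on which $s$ is stable is stranded, and combining $T[s]$ with the subdivision of $G'$ in $N'$ supplied by (iii) yields a subtree of $N$ that subdivides $G$.

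For necessity, let $T\subseteq N$ be a subdivision of $G$ with no dummy leaves. Since $s$ is stable on the type-1 leaf $\ell$ we have $s\in T$; by the tree-path argument above, every $u_i$ is in $T$. Hence $T[s]$ subdivides some $G[v_{m^*}]$, and the branching nodes of $T[s]$ along the tree path form an increasing sequence $u_{i_2^*},\dots,u_{i_{m^*}^*}$ whose branch at $u_{i_h^*}$ is the edge $(u_{i_h^*},u_{i_h^*-1}')$ followed by a subtree of $N[u_{i_h^*-1}']$ subdividing $G[v_{h-1}']$. Each $i_h^*$ is therefore a feasible candidate for $j_h$, so a straightforward induction gives $j_h\le i_h^*$ and in particular $l\ge m^*$. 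If $l=m^*$, the greedy indices themselves produce a valid embedding and (i)--(iii) follow in direct parallel with Proposition~\ref{prop_c1}. If $l>m^*$, an exchange step is required: since the greedy process witnesses $N[s]$ displaying the larger $G[v_l]$, every leaf of $\mathcal{L}(G[v_l])\setminus\mathcal{L}(G[v_{m^*}])$ already lies in $\mathcal{L}(N[s])$, so I can prune from the above-$s$ part of $T$ the branches that served those leaves and replace the below-$s$ part of $T$ by the greedy embedding, obtaining a new subdivision $T^{**}$ whose below-$s$ part realises the greedy sequence and whose above-$s$ part witnesses (iii). Condition (ii) then follows because each $u_{j_i}$ lies on the tree path from the stable vertex $s$ and is thus itself a stable tree vertex by Proposition~\ref{prop1}.1, forcing every leaf of $L_1(u_{j_i})\cup L_2(u_{j_i})$ into any subdivision inside $N[u_{j_i}]$, while $\mathcal{L}(G[v_i])\subseteq\mathcal{L}(N[u_{j_i}])$ is immediate from $G[v_i]$ being displayed there.

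The hard part, as expected, is the exchange argument when $l>m^*$: the rerouting of branches from above $s$ to below $s$ must not disturb the embedding of leaves that do not lie in $\mathcal{V}(N[s])$, in particular the type-3 leaves whose reticulation parents have a parent above $s$. The monotonicity that justifies the greedy choice is that selecting the smallest feasible $j_h$ leaves every strictly larger index still available for subsequent steps, whereas any larger choice forecloses on those indices; hence the greedy sequence is coordinatewise minimal among feasible sequences, so no resource that could have been used later is ever wasted at an earlier step.
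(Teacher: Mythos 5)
Your sufficiency argument and your treatment of conditions (i) and (ii) track the paper's proof closely (the paper additionally checks that the subtrees $\widehat{T}_h$ are pairwise node-disjoint, since two of them could a priori share a leaf of $L_2(s)$ and its reticulation parent; also, stability of each $u_{j_i}$ propagates \emph{upward} from the leaf $\ell$ via Proposition~\ref{prop1}.1, not downward from $s$). The genuine gap is in the necessity of condition (iii). You reduce it to an exchange step in the case $l>m^*$ --- splice the greedy embedding into $N[s]$ and ``prune from the above-$s$ part of $T$ the branches that served those leaves'' --- and then you yourself flag the rerouting as the hard part and leave it unproved. The danger you name is real: a reticulation below $s$ with one parent outside $N[s]$ could be claimed both by the surviving above-$s$ part of $T$ (to reach a type-3 leaf outside $\mathcal{L}(G[t])$) and by the greedy embedding you splice in below $s$. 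Asserting that the splice works is the crux of your route, not a detail, so as written the proof is incomplete.

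The paper's proof shows the exchange is unnecessary, and this is the fix you should adopt: condition (iii) only concerns $N'$ and $G'$, in which $N[s]$ and $G[t]$ are contracted to single labelled leaves, so one never needs to realize the greedy embedding inside $N[s]$. It suffices to locate the unique node $w$ of $T$ corresponding to $t$ and prune. If $w$ is at or above $s$, delete $T[w]$ except $w$ and the $w$--$s$ path; the result is a subdivision of $G'$ inside $N'$ (here one must still check that every leaf of $\mathcal{L}(N[s])\setminus\mathcal{L}(G[t])$ stays reachable, which is exactly why $X$ excludes those leaves and their reticulation parents). If $w$ is strictly below $s$, greedy optimality --- your coordinatewise-minimality observation, which is the only place exchange reasoning is genuinely needed --- shows $N[s]$ cannot display $G[\mbox{pr}_G(t)]$, hence no edge of $T$ branches off the $s$--$w$ path above $w$, so $\mathcal{L}(T[s])=\mathcal{L}(G[t])$ and cutting $T$ at $s$ gives the witness. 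This two-case pruning replaces your unfinished exchange step.
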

\begin{proof}
We first prove the sufficiency. Assume conditions (i), (ii), and (iii) holds, we need to prove that $N$ displays $G$.

By condition (i), there is a subtree $T_i$ of $N[u_{j_i}']$ that is a subdivision of $G[v_i']$. We can further assume that each tree $T_i$ has no dummy leaf and is rooted at $u_{j_i}'$. 
Note that $N[s']$ and $N[s'']$ are trees, so for every distinct pair $a,b \in \{1, 2, \ldots, l\}$, the trees $T_a, T_b$ are disjoint except perhaps on some leaves in $L_2(s)$ and their parents, i.e.
$$\mathcal{V}(T_a) \cap \mathcal{V}(T_b) \subseteq \{\ell, \mbox{pr}_N(\ell) : \ell \in L_2(s) \}.$$
$T_a$ and $T_b$ have disjoint leaf sets, because the trees are subdivisions of disjoint subtrees of $G$. Moreover, $T_a$ and $T_b$ do not contain any dummy leaf, so at most one of them contain the parent of a leaf in $L_2(s)$.
Hence the trees $\{T_i\}$ are pairwise node-disjoint.
We can then construct a tree $T^*$ by combining edges of $T_i$s, the edges $\{(u_{j_i + 1}, u_{j_i}'), i = 1,2,\ldots, l\}$, and the tree path from $s$ to $\ell$. $T^*$ is a subdivision of $G[t]$.

By condition (iii), there is  a subtree $T'$ of $N'$ that is a subdvision of $G'$.
By a similar reasoning as in \Cref{prop_c1}, 
we then combine $T^*$ and $T'$ into a subtree $T$ of $N$ that is a subdivision of $G$. This completes the proof for the sufficiency.

To prove the necessity, we assume that $N$ displays $G$, so there is a subtree $T$ of $N$ that is a subdivision of $G$. We can further assume that $T$ has no dummy leaf. 

Condition (i) is immediate by how we define the indexes $j_i$'s. 

Neext, we prove condition (ii). The node $u_{j_i}$ is stable on the leaves in $L_1(u_{j_i}) \cup L_2(u_{j_i})$, and therefore the leaves must be below $u_{j_i}$ in any subtree of $N$ that subdivides $G$, which gives us the left part of the inequality in (ii). The right part of the inequality can be proved by induction, as $\mathcal{L}(G[v_1]) = \mathcal{L}(N[u_1])$ and the recursive relation
$$\mathcal{L}(G[v_{i}]) = \mathcal{L}(G[v_{i-1}]) \cup \mathcal{L}(G[v_{i-1}']) \subseteq \mathcal{L}(N[u_{j_{i-1}}]) \cup \mathcal{L}(N[u_{j_i-1}']) \subseteq \mathcal{L}(N[u_{j_{i}}]).$$

Finally, we prove condition (iii). As tree $T$ is a subdivision of $G$, there is a node in $T$, say $w$, that correspond to node $t$ in $G$, such that $T[w]$ is a subdivision of $G[t]$.
As the node $s$ is stable, it must be in $T$. Moreover, $s$ and $w$ are both in the path from $\rho_N$ to $\ell$ in $T$. We consider two cases.

First, assume that $w$ is strictly below $s$. 
Let $P_T[s,w]$ be the path from $s$ to $w$ in $T$.
If there is an edge $e \in \mathcal{E}(T) \backslash \mathcal{E}(P_T[s,w])$ whose tail is in $\mathcal{V}(P_T[s,w]) \backslash \{w\}$, then $\mathcal{L}(T[w]) \subsetneq \mathcal{L}(T[s])$ as $T$ has no dummy leaf. 
This implies that $N[s]$ displays $G[\mbox{pr}_G(t)]$, contradicting the maximality of $t$. 
Therefore, no such edge may exist, and so $\mathcal{L}(T[s]) = \mathcal{L}(T[w]) = \mathcal{L}(G[t])$.
It is not hard to see that $T - (\mathcal{V}(T[s]) \backslash \{s\})$ is then a subtree of $N'$ that displays $G'$. 

Next, assume that $w$ is strictly above or equal to $s$. Then the tree $T' := T - \mathcal{V}(T[w] \backslash \{w\} + P_T[w,s])$ is a subtree of $N'$ that displays $G'$. 
This completes the proof.
\end{proof}

Algorithm {\sc Dissolve\_C2} can be called to dissolve $N[s]$ if case C2 holds.

\begin{center}
\footnotesize
\begin{tabular}{l}
\hline
{\sc [$N'$, $G'$, $l$] = Dissolve\_C2 ($N$, $s$, $G$) }\\
\\
{\bf Input}: A binary phylogenetic tree $G$ and a reticulation-visible network $N$.  \\
\hspace*{1em} The node $s$ satisfies properties I, II, III, and C2.\\
{\bf Output}: "NO" if $N$ does not display $G$, otherwise output  $N'$ and $G'$ as in\\ 
\hspace*{1em} \Cref{prop_c2}. The index $l$ is an optional output.\\
\\

1. Find a leaf $\ell$ such that there is a tree path from $s$ to $\ell$;\\
\hspace*{1em} Set $u_1 = \ell$ and recursively define $u_{i+1} = \mbox{pr}_N(u_i)$ until $u_{k+1} = s$;\\
\hspace*{1em} Set $u_i' = \mbox{sb}_N(u_i)$ for every $i = 1, 2, \ldots k$;\\
2. Set $v_1 = \ell, v_2 = \mbox{pr}_G(v_1),$ and $v_1' = \mbox{sb}_G(v_1)$;\\
3. Set $l = 1$, $\alpha = 2$ and $\beta = 2$;   ~~~~\% $\alpha$ iterates on $N$, $\beta$ iterates on $G$\\
4. For ($\alpha = 2 : k$), do \{\\
\hspace*{1em} 4.0 set DISP = 0; ~~~~\% DISP is a flag showing whether new subtree of $G$ is found;\\
\hspace*{1em} 4.1 Traverse $N[u_\alpha']$ and find $L_1(u_\alpha')$;\\

\hspace*{1em} 4.2 
If ($u_\alpha'$ is a reticulation) \{\\
\hspace*{4em} If ( $\mathcal{V}(G[v_\beta']) = \{\mbox{ch}_N(u_\alpha')$\} )\{\\
\hspace*{5.5em} Set DISP = 1, $N = N - \{u_\alpha, u_\alpha', \mbox{ch}_N(u_\alpha')\}$, $G = G - \{v_\beta, v_\beta'\}$;\\
\hspace*{5.5em} Label $u_{\alpha+1}, v_{\beta +1}$ with the same taxon as $u_\alpha$; \}\\
\hspace*{4em} Else \{ \\
\hspace*{5.5em} Set $N = N - (u_\alpha+1, u_\alpha') - u_\alpha$;\\
\hspace*{5.5em} Label $u_{\alpha+1}$ with the same taxon as $u_\alpha$;\}\}\\

\hspace*{1em}4.3
ElseIf $(L_1(u_\alpha') \neq \emptyset)$ \{\\
\hspace*{4em} If ( {\sc Dissolve\_C1($N,u_\alpha,G$)} = "NO" ) \{ stop and return "NO" \};\\
\hspace*{4em} Else \{ [$N, G$] = {\sc Dissolve\_C1($N,u_\alpha,G$)} and DISP = 1\}\} \% end if\\

\hspace*{1em} 4.4
Else \{\\
\hspace*{4em} If ( {\sc IsSubtree($N[u_\alpha'], G[v_\beta']$)} = "YES" )\{\\
\hspace*{5.5em} Set $X_\alpha = \mathcal{V}(N[u_\alpha']) \backslash \{\ell ,\mbox{pr}_N(\ell) : \ell \in \mathcal{L}(N[u_\alpha'])\backslash \mathcal{L}(G[v_\beta'])$\};\\
\hspace*{5.5em} Set DISP = 1, $N = N - X_\alpha - u_\alpha$, $G = G - \mathcal{V}(G[v_\beta']) - v_\beta)$;\\
\hspace*{5.5em} Label $u_{\alpha+1}, v_{\beta +1}$ with the same taxon as $u_\alpha$; \}\\
\hspace*{4em} Else \{ \\
\hspace*{5.5em} Set $Y_\alpha = \mathcal{V}(N[u_\alpha']) \backslash \{\ell ,\mbox{pr}_N(\ell) : \ell \in
\mathcal{L}(N[u_\alpha']) \}$;\\
\hspace*{5.5em} Set $N = N - Y_\alpha - u_\alpha$;\\
\hspace*{5.5em} Label $u_{\alpha+1}$ with the same taxon as $u_\alpha$;\}\}\% end else\\

\hspace*{1em} 4.5 If (DISP = 1) \{set $l = \beta$, $\beta = \beta +1$, and find $v_\beta, v_\beta '$\};\} \% end for\\
5. Output the resulting network and tree as $N'$ and $G'$; Output $l$ if queried.

\\
\hline

\end{tabular}
\end{center}

In step 4.2, if $u_\alpha'$ is a reticulation, then its child must be a leaf. Then $N[u_{\alpha+1}]$ displays $G[v{\beta+1}]$ if and only if $v_\beta'$ is precisely the leaf child of $u_\alpha'$. 

If $u_\alpha'$ is a tree node and there is a tree path from it to a leaf, we may call the subalgorithm {\sc Dissolve\_C1}.
If $L_1(u_\alpha')$ is empty, that means every leaf below $u_\alpha'$ is a leaf in $L_3(u_\alpha')$. This is because $N[u_\alpha']$ is either a subtree of $N[u_k]$ if $\alpha \neq k$ or equal to $N[u_k']$ if $\alpha = k$, both of which are trees (property III), which in turn implies that $L_2(u_\alpha') = \emptyset$.
If $N[u_\alpha']$ displays $G[v_\beta]$, then we found a new index $j_i$, and we update the network accordingly.
Otherwise, we remove every reticulation edge in $N[u_\alpha']$, and contract $N[u_{\alpha+1}]$ into a new leaf labeled with the taxon of $u_\alpha$.

\subsection{Case C3: \texorpdfstring{$s$}{Lg}  has two unstable children.}
 
By property I, $s$ is a stable node, therefore $L_1(s) \cup L_2(s) \neq \emptyset$. But property C3 implies that $L_1(s)$ is empty, as otherwise at least one of the children of $s$ must be stable. 
Thus, we can assume that $\ell$ is a leaf in $L_2(s)$.
We let $e_1, e_2$ be the incoming edges of $\mbox{pr}_N(\ell)$, and let $N_1 = N-e_1$, $N_2 = N-e_2$. It is clear that there is a tree path from $s$ to $\ell$ in both $N_1$ and $N_2$.

For $i = 1,2$, let $t_i$ be the highest ancestor of $\ell$ in $G$ such that $N_i[s]$ displays $G[t_i]$. Without loss of generality, assume that $t_1$ is above $t_2$. Then the following proposition holds.

\begin{proposition}
$N$ displays $G$ if and only if $N_1$ displays $G$.
\end{proposition}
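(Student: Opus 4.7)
The plan is to prove the two directions separately. The ``if'' direction is immediate: any spanning subtree of $N_1$ subdividing $G$ is also a spanning subtree of $N$, since $N_1$ is obtained from $N$ by removing a single edge. For the converse, I would start with a spanning subtree $T$ of $N$ that subdivides $G$ and analyze which of the two incoming edges $e_1, e_2$ of the reticulation $\mbox{pr}_N(\ell)$ is used by $T$. Exactly one of them is used, since $\mbox{pr}_N(\ell)$ has in-degree $1$ in any spanning subtree.

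If $T$ uses $e_2$, then $T$ avoids $e_1$, so $T$ is already a subtree of $N_1$ and there is nothing more to do. The substantive case is when $T$ uses $e_1$, in which case $T \subseteq N_2$. Since $s$ is stable (Property I), $s \in T$, and $T[s]$ is a subdivision of $G[t^*]$ for some ancestor $t^*$ of $\ell$ in $G$. Because $T[s] \subseteq N_2[s]$, the network $N_2[s]$ displays $G[t^*]$, and the maximality in the definition of $t_2$ forces $t^*$ to be below or equal to $t_2$; combined with the WLOG assumption that $t_1$ is above $t_2$, this gives that $t^*$ is below or equal to $t_1$. The idea is then to replace $T[s]$ by a subdivision of $G[t^*]$ drawn from $N_1[s]$.

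Since $N_1[s]$ displays $G[t_1]$, pick a subtree $T^+ \subseteq N_1[s]$ subdividing $G[t_1]$, and let $w^+ \in T^+$ be the node corresponding to $t^*$; then $T^+[w^+] \subseteq N_1[s]$ is itself a subdivision of $G[t^*]$. Extending $T^+[w^+]$ by any path in $N_1[s]$ from $s$ down to $w^+$ yields a tree $T^\dagger \subseteq N_1[s]$ rooted at $s$ which, after contraction of the indegree-and-outdegree-one nodes on the extension path, is still a subdivision of $G[t^*]$. The candidate replacement is $T' := \bigl(T - (\mathcal{V}(T[s]) \setminus \{s\})\bigr) \cup T^\dagger$, obtained by splicing $T^\dagger$ into $T$ at $s$.

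The hard part will be verifying that $T'$ is a valid spanning subtree of $N_1$ subdividing $G$. The three checks I would carry out are: (i) $T' \subseteq N_1$, using that both endpoints of $e_1$ lie strictly below $s$, so the above-$s$ portion of $T$ already avoids $e_1$, and $T^\dagger \subseteq N_1[s] \subseteq N_1$; (ii) the labeled leaf set is preserved, since $\mathcal{L}(T[s]) = \mathcal{L}(G[t^*]) = \mathcal{L}(T^\dagger)$, so no labeled leaf is lost or duplicated and $T'$ is spanning; and (iii) $T'$ is a subdivision of $G$, because the part of $T$ above $s$ subdivides $G$ with $G[t^*]$ collapsed to the leaf $t^*$, the piece $T^\dagger$ subdivides $G[t^*]$, and the two pieces glue correctly at the $s$--$t^*$ interface (with the degree-1/1 nodes along the extension path being contracted away). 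Once these are checked, $N_1$ displays $G$ via $T'$ and the proposition follows.
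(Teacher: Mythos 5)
Your proof is correct and follows the same exchange strategy as the paper: the sufficiency is immediate, and for the necessity, when $T$ uses $e_1$ (so $T$ is a subtree of $N_2$) you exploit the maximality in the definitions of $t_1$ and $t_2$ to swap the portion of $T$ inside $N[s]$ for a subdivision drawn from $N_1[s]$. The only substantive difference is the splice point. The paper locates the node $s_1$ of $T$ corresponding to $t_1$ (which lies weakly above $s$), deletes $T[s_1]$, and reattaches a full subdivision of $G[t_1]$ rooted at $s$ via the path $P[s_1,s]$; you instead keep everything of $T$ above $s$, identify the node $t^*$ that $T[s]$ actually subdivides, and replace $T[s]$ by the sub-subtree $T^+[w^+]$ of the $G[t_1]$-subdivision in $N_1[s]$ corresponding to $t^*$, using $t^* \le t_2 \le t_1$ to guarantee that this sub-subtree exists. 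The two constructions are interchangeable, and yours is in fact slightly more careful in the branch $t_1=t_2$, where the paper's replacement $T-\mathcal{E}(T[s])+\mathcal{E}(T')$ implicitly treats $T[s]$ as a subdivision of exactly $G[t_2]$ rather than of some $G[t^*]$ with $t^*$ strictly below $t_2$. The verification points you list at the end (the above-$s$ part of $T$ avoids $e_1$ because both endpoints of $e_1$ are strictly below $s$; the leaf sets of $T[s]$ and $T^\dagger$ coincide; the pieces glue into a subdivision of $G$) are precisely what needs checking, so the argument is complete at the same level of rigour as the paper's.
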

\begin{proof}
The sufficiency condition is trivial, as $N_1$ is a subnetwork of $N$.

To prove the necessity, we assume $T$ is a subtree of $N$ that is a subdivision of $G$ and contain no dummy leaf. If $T$ does not contain $e_1$, then $T$ is a subtree of $N_1$ and we are done. 
Otherwise, $T$ is a subtree of $N_2$, and so by the fact that $N_2$ can display at most $G[t_2]$, we have $T[s]$ is at most a subdivision of $G[t_2]$.

Let $T'$ be a subtree of $N_1[s]$ that is rooted at $s$ and is a subdivision of $G[t_1]$. The existence of $T'$ is guaranteed from the assumption that $N_1[s]$ displays $G[t_1]$. 
If $t_1 = t_2$, then the tree 
$$T - \mathcal{E}(T[s]) + \mathcal{E}(T')$$ 
is a subdivision of $G$ that is in $N_1$, and we are done.
Otherwise, $t_1$ is strictly above $t_2$. If $s_1$ is a node in $T$ that correspond to $t_1$ in $G$, then $s_1$ is strictly above $s$ as $N_2[s]$ does not display $G[t_1]$ and $T$ is a subtree of $N_2[s]$. 
Thus, we can consider the tree
$$T - \mathcal{E}(T[s_1]) + P[s_1,s] + \mathcal{E}(T'),$$
where $P[s_1,s]$ is the path from $s_1$ to $s$ in $T$.
The new tree is then a subtree of $N_1$ that is a subdivision of $G$, which completes the proof.
\end{proof}

Using the above proposition, we can dissolve $N[s]$ simply by calling {\sc Dissolve\_C2} twice as follows.

\begin{center}
\footnotesize
\begin{tabular}{l}
\hline
{\sc [$N'$, $G'$] = Dissolve\_C3 ($N$, $s$, $G$) }\\
\\
{\bf Input}: A binary phylogenetic tree $G$ and a reticulation-visible network $N$.  \\
\hspace*{1em} The node $s$ satisfies properties I, II, III, and C3.\\
{\bf Output}: "NO" if $N$ does not display $G$, otherwise output  $N'$ and $G'$ as in \Cref{prop_c2}.\\ 
1. Choose a leaf $\ell \in L_2(s)$;\\
\hspace*{1em} Let $e_1, e_2$ be the incoming edges of the reticulation parent of $\ell$.\\
2. Compute [$N_1, G_1, l_1$] = {\sc Dissolve\_C2($N-e_1, s, G$)};\\
3. Compute [$N_2, G_2, l_2$] = {\sc Dissolve\_C2($N-e_2, s, G$)};\\
4. Let $z = 1$ if $l_1 \geq l_2$ and let $z = 2$ otherwise;\\
5. Set $N' = N_z$ and $G' = G_z$.

\\
\hline

\end{tabular}
\end{center}

\section{Solving the tree containment problem}

Let $C_j$ be the lowest component of $N$ and let $\rho_j$ be its root.
Every leaf $\ell \in L_2(\rho_j)$ has a reticulation parent $r$, and both parents of $r$, say $u_1$ and $u_2$, are in $C_j$. 
We define $\mbox{sp}(\ell)$ to be the lowest common ancestor of $u_1$ and $u_2$ in $C_j$ (such node is also known as the split node for the reticulation $r$). 
A node $v \in \mathcal{V}(C_j)$ is stable on $\ell$ if and only if $v$ is above $\mbox{sp}(\ell)$. 
Finally, we define $S$ to be the set of split nodes in $C_j$, i.e.
$$S = \{\mbox{sp}(\ell) : \ell \in L_2(\rho_j) \}.$$
A node $s \in S$ is a lowest node in $S$, if there is no $s'\in S$ that is strictly below $s$.

Assume that $s = sp(\ell)$ is a lowest node in $S$.
$s$ is a stable at $\ell$. Furthermore, $s$ has two children, as otherwise it contradicts the fact that $sp(\ell)$ is the lowest common ancestors of the parents of $r$ ($r$ is the parent of $\ell$). Therefore $s$ satisfies property I and II. Let $s', s''$ denote the children of $s$, then the following proposition proves that $s$ also satisfies property III.

\begin{proposition} \label{prop_treepath}
The subnetwork $N[s']$ and $N[s'']$ are simply trees.
Furthermore, $s'$ (resp. $s''$) is stable if and only if there is a tree path from $s'$ (resp. $s''$) to a leaf.
\end{proposition}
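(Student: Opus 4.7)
The plan is to treat the two assertions separately. For the first---that $N[s']$ and $N[s'']$ are trees---I will argue by contradiction. If $N[s']$ fails to be a tree, then some reticulation $r$ in $N[s']$ has both of its $N$-parents $p_1, p_2$ below $s'$; since $C_j$ is the lowest big tree component, the child of $r$ must be a leaf $\ell_r \in L_2(\rho_j)$, and $p_1, p_2$ are themselves tree nodes in $C_j$ (they have positive outdegree and lie below $\rho_j$, hence cannot belong to a leaf component). Then $\mbox{sp}(\ell_r) = \mbox{lca}_{C_j}(p_1, p_2)$ is a split node lying strictly below $s$, contradicting the choice of $s$ as a lowest element of $S$. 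The argument for $N[s'']$ is identical.

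For the biconditional, the easy direction is that a tree path $s' = w_0 \to w_1 \to \cdots \to w_k = \ell'$ forces $s'$ to be stable on $\ell'$: each $w_i$ with $i \geq 1$ is a tree node (being the head of a tree edge) and therefore has $w_{i-1}$ as its unique $N$-parent, so a straightforward induction from $\ell'$ upward shows that every path from $\rho_N$ to $\ell'$ must traverse $s'$.

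The main obstacle is the converse, which I will handle by splitting on whether $s'$ is a reticulation or a tree node in $N$. If $s'$ is a reticulation, the lowest-big-component hypothesis forces the tree component rooted at $s'$'s unique child to be a leaf component, so the edge from $s'$ to its leaf child is already a tree path of length one. Otherwise $s'$ is a tree node, stable on some leaf $\ell^*$, and since $N[s']$ is a tree there is a unique $s'$-to-$\ell^*$ path $P$ inside $N[s']$. It suffices to show no node of $P$ other than $s'$ is a reticulation in $N$. Suppose for contradiction that some $w_i \in P$ is a reticulation; by the argument used for the first part, the other $N$-parent $p_2$ of $w_i$ cannot lie below $s'$ (else $\mbox{sp}$ of $w_i$'s leaf child would be in $S$ strictly below $s$). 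Consequently some path from $\rho_N$ to $p_2$ must avoid $s'$, for otherwise $s'$ would be a stable ancestor of $p_2$ and $p_2$ would be below $s'$. Concatenating such a path with the edge $(p_2, w_i)$ and with the suffix of $P$ from $w_i$ down to $\ell^*$ (which stays strictly below $s'$) yields a walk from $\rho_N$ to $\ell^*$ avoiding $s'$, contradicting the stability of $s'$ on $\ell^*$. Hence every node of $P$ after $s'$ is a tree node, $P$ is a tree path, and we are done.
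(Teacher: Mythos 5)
Your proof is correct and takes essentially the same route as the paper: the first claim is the identical split-node contradiction (a reticulation with both parents below $s'$ would give an element of $S$ strictly below $s$), and your converse in the second claim amounts to a spelled-out proof of the fact the paper invokes directly, namely that a stable node must be stable on a type-1 or type-2 leaf, with type-2 excluded because $N[s']$ is a tree. The only difference is one of detail: you verify that fact from scratch (including the case where $s'$ is itself a reticulation), whereas the paper cites it as already established.
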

\begin{proof}
Suppose on the contrary, $N[s']$ contains a reticulation $r'$ and both its parents. Let $\ell'$ be the child of $r'$. Then $s'$ is above $\mbox{sp}(\ell')$ and is strictly below $s$, contradicting the fact that $s$ is a lowest node in $S$.

Next, if $s'$ is stable, it is the stable ancestor of a leaf in either $L_1(s)$ or $L_2(s)$. The latter is impossible as $N[s']$ is a tree, so the former must hold, which further implies that there is a tree path from $s'$ to a leaf. Conversely, if there is a tree path from $s'$ to a leaf, then we can immediately deduce that $v$ is stable.
\end{proof}

We then order the elements of $S$ as $s_1, s_2, \ldots, s_p$ in post-order, so that $s_1$ is a lowest node in $S$.
It is not hard to see that if $N[s_1]$ is contracted into a single leaf, then $s_2$ becomes the next lowest node in $S$, assuming it satisfies property II. We can then repeatedly run either {\sc Dissolve\_C1}, {\sc Dissolve\_C2}, or {\sc Dissolve\_C3}, depending on the topology of the subnetwork $N[s_i]$, for every $s_i \in S$ in ascending index order.
If $s_p$ is the root $\rho_j$ of $C_j$, then this process ends with $N[\rho_j]$ contracted into a single leaf.
Otherwise, then the process terminated with the subnetwork $N[\rho_j]$ satisfying $L_2(\rho_j) = \emptyset$. As $\rho_j$ is stable, it must then be stable on a type-1 leaf, and so there is a tree path from $\rho_j$ to a leaf. 
We then run either {\sc Dissolve\_C1} or {\sc Dissolve\_C2} on $N[\rho_j]$ to finally contract $N[\rho_j]$ into a leaf.

Finally, we present the algorithm for solving tree containment problem for a binary reticulation-visible network $N$ and a binary tree $G$.

\begin{center}
\footnotesize
\begin{tabular}{l}
\hline
{\sc TCPSolver ($N$, $G$) }\\
\\
{\bf Input}: A binary phylogenetic tree $G$ and a binary reticulation-visible network $N$. \\
{\bf Output}: "NO" if $N$ does not display $G$, otherwise "YES".\\
\\

1. Traverse the network $N$, and find the big tree components $C_0, C_1, C_2, \ldots, C_q$, such\\
\hspace*{1em} that $\rho_j$ (root of $C_i$) is below $\rho_i$ only if $j \geq i$. $C_0$ is the component whose root is $\rho_N$.\\
\hspace*{1em} Pre-process $G$ so enquiring lowest common ancestor of two nodes takes $O(1)$ time;\\
2. For ($i = q:0$) \{\\
\hspace*{1em} 2.1 Pre-process $C_i$ as in \citep{harel1984fast} and compute $L_2(\rho_i)$;\\
\hspace*{1em} 2.2 Compute $S = \{sp(\ell): \ell \in L_2(\rho_i) \}$, order the elements of $S$\\
\hspace*{2.5em}  as $s_1, s_2, \ldots, s_p$ in post-order; set $s_{p+1} = \rho_k$;\\
\hspace*{1em} 2.3 For ($j = 1: p+1$), \{\\
\hspace*{4em} If ($s_j$ is a leaf), break for;\\
\hspace*{4em} Traverse $N[s_j]$;\\
\hspace*{4em} If (C1 holds) \{Call Dissolve\_C1($N,s_j,G$)\};\\
\hspace*{4em} ~~~ElseIf (C2 holds) \{Call Dissolve\_C2($N,s_j,G$)\};\\
\hspace*{4em} ~~~Else \{Call Dissolve\_C3($N,s_j,G$)\};\\

\hspace*{4em} If (subalgorithm return "NO") \{stop and return "NO"\};\\
\hspace*{4em} ~~~Else \{update $N$ and $G$ and continue\}\\
\hspace*{4em}\} \% end inner for\\
\hspace*{1em}\} \% end outer for\\
3. Return "YES";

\\
\hline

\end{tabular}
\end{center}

\noindent 

We first pre-process the network $N$ to find all the big tree component, and the tree $G$ so enquiring lowest common ancestor of two nodes becomes easy (see \citep{harel1984fast}). The pre-processing of $G$ requires $O(|\mathcal{V}(G)|)$ time, and is also used  step 2.1  for the tree component $C_i$.
The correctness of the algorithm follows from the previous discussion.

\textbf{Time complexity.}

We note that during the algorithm, we traverse a subtree of $G$, say $G[x]$, whenever we need to check whether it is displayed in a subnetwork $N[s]$ or not. 
If $G[x]$ has more nodes than $N[s]$, then we can simply terminate the traversal on $G[x]$ and deduce that $N[s]$ does not display $G[x]$. This allows us to bound the time complexity with the number of nodes in $N[s]$.

First, we show that  {\sc IsSubtree($T,G$)} runs in O($|\mathcal{V}(T)|$) time as follows.
Step 1 and 2 takes $O(|\mathcal{V}(T)|)$ time. 
Checking whether  $\mathcal{L}(G) \subseteq \mathcal{L}(T)$ in step 3 can be done in $O(|\mathcal{L}(T)|)$ time (if $|\mathcal{L}(G)| > |\mathcal{L}(T)|$, we can directly deduce that $\mathcal{L}(G) \not \subseteq \mathcal{L}(T)$). 
We continue the algorithm only if $\mathcal{L}(G) \subseteq \mathcal{L}(T)$. 
Finding the binary subtree $T'$ takes $O(|\mathcal{L}(G)|)$ time (\citep{cole2000n}), and comparing two binary trees in step 5 can be done in  $O(|\mathcal{L}(G)|)$ time too, and hence 
{\sc IsSubtree($T,G$)} runs in O($|\mathcal{V}(T)|$) time

Second, we show that {\sc Dissolve\_C1($N,s,G$)} runs in $O(|\mathcal{V}(N[s])|)$ time. 
Here, we assume that $|\mathcal{V}(G[t])| \leq |\mathcal{V}(N[s])|$, as otherwise we can immediately deduce $N[s]$ does not display $G[t]$.
Step 1 can be done by traversing $N[s]$ in breadth-first search manner so it takes $O(|\mathcal{V}(N[s])|)$ time. Step 2 can be done in $O(1)$ time, since we have pre-process $G$. $O(|\mathcal{L}(N[s])|+|\mathcal{L}(G[t])|)$ time. Step 4 takes $O(|\mathcal{V}(N[s'])|)$ for calling {\sc IsSubtree} and updating the network $N$.
As step 5 is symmetric with step 4, it takes $O(|\mathcal{V}(N[s''])|)$ time. 
Step 6 takes $O(|\mathcal{V}(G[t])|)$ time. 
Hence, the total time needed is $O(|\mathcal{V}(N[s])|)$. 

Third, we show that {\sc Dissolve\_C2($N,s,G$)} runs in $O(|\mathcal{V}(N[s])|)$ time. 
We again assume that $|\mathcal{V}(G[t])| \leq |\mathcal{V}(N[s])|$.
Step 1 can be done by traversing $N[s]$ in post-order manner, which requires $O(|\mathcal{V}(N[s])|)$ time. Step 2 and 3 simply takes constant time.
Consider an iteration of step 4. 
Step 4.1 takes $O(|\mathcal{V}(N[u_\alpha']))|$. Step 4.2 requires constant time, as we only need to check constant number of nodes. Step 4.3 requires $O(|\mathcal{V}(N[u_\alpha'])|)$ by the discussion in the previous paragraph. Step 4.4 also requires $O(|\mathcal{V}(N[u_\alpha'])|)$, according to the discussion for {\sc IsSubtree} time complexity above. Hence, an iteration of step 4 takes linear time with respect to the nodes of subnetwork under consideration. Afterwards, the subnetwork is contracted into a single leaf, so we always consider different nodes in the next iteration (except perhaps on some leaves in $L_2(s)$ and their parents, which are counted at most twice). Hence step 4 requires $O(|\mathcal{V}(N[s])|)$ time in total. Hence the algorithm {\sc Dissolve\_C2($N,s,G$)} runs in $O(|\mathcal{V}(N[s])|)$ time.

As the algorithm {\sc Dissolve\_C3($N,s,G$)} calls {\sc Dissolve\_C2} only twice, it also runs in $O(|\mathcal{V}(N[s])|)$ time. 

Finally, we consider the algorithm {\sc TCPSolver}.
Step 1 can be done by a breadth-first search on $N$, and requires $O(|\mathcal{E}(N)|)$.
Now, we consider an iteration of step 2. 
Step 2.1 can be done in $O(|\mathcal{V}(C_i)|)$ time. 
Step 2.2 can be done by inquiring the lowest common ancestor of the parents of $\mbox{pr}_N(\ell)$ whenever $\ell \in L_2(\rho_i)$, and traverse $C_i$ once for the post-order labeling.
For an iteration of step 2.3, we call one of the three algorithms in the previous section, each of which requires $O(|\mathcal{V}(N[s_i])|)$. At the end of the iteration, the subnetwork $N[s_i]$ under consideration is contracted into a leaf, and thus step 2.3 requires at most $O(|\mathcal{V}(N[\rho_i])|)$ time.
Hence the total time needed for step 2 is $O(|\mathcal{V}(N)|)$.

\citet{RECOMB2015} proved that the number of nodes and edges in a binary reticulation-visible network with $n$ leaves is $O(n)$. We conclude this section with the following theorem:

\begin{theorem}
If $N$ is a binary reticulation-visible network and $G$ is a binary tree with $n$ leaves, then the tree containment problem for $N$ and $G$ can be solved in $O(n)$ time.
\end{theorem}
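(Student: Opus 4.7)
The plan is to invoke the algorithm {\sc TCPSolver} and argue separately its correctness and its running time; the bound $|\mathcal{V}(N)|, |\mathcal{E}(N)| = O(n)$ for binary reticulation-visible networks (cited from \citep{RECOMB2015}) then converts the node-based complexity into the claimed $O(n)$ bound in terms of leaves.

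For correctness, I would proceed inductively on the number of big tree components. The outer loop of {\sc TCPSolver} processes the tree components in reverse breadth-first order, so at each iteration $C_i$ is the current lowest big tree component. Within one iteration, the inner loop walks up the set $S$ of split nodes of $C_i$ in post-order. Using \Cref{prop_treepath}, a lowest $s \in S$ satisfies properties I, II, III, and hence falls into exactly one of cases C1, C2, C3; \Cref{prop_c1}, \Cref{prop_c2}, and the proposition for C3 then guarantee that replacing $N$ and $G$ by the outputs of the corresponding {\sc Dissolve} routine preserves the answer to the tree containment problem. After the inner loop terminates, $N[\rho_i]$ has been contracted into a single leaf, so the next outer iteration begins with a smaller instance in which the formerly second-lowest component is now lowest. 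When the outer loop finishes, $N$ has been reduced to a single leaf, at which point a positive answer is appropriate.

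For the running time, I would build up the bound in the order the algorithm is presented: first {\sc IsSubtree}, then each {\sc Dissolve} routine, and finally {\sc TCPSolver}. The key input-size accounting is that the traversal of any subtree $G[x]$ is aborted as soon as $|\mathcal{V}(G[x])|$ exceeds $|\mathcal{V}(N[s])|$, so every check costs $O(|\mathcal{V}(N[s])|)$. Pre-processing $G$ for lowest-common-ancestor queries (\citep{harel1984fast}) and pre-processing $G$ for the subtree-restriction operation (\citep{cole2000n}) together cost $O(|\mathcal{V}(G)|) = O(n)$ once and for all. Then, as shown in the excerpt, {\sc IsSubtree}$(T,G)$, {\sc Dissolve\_C1}, {\sc Dissolve\_C2}, and {\sc Dissolve\_C3} each run in $O(|\mathcal{V}(N[s])|)$ time.

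The main obstacle I expect is the amortization argument inside an iteration of the outer loop of {\sc TCPSolver}. The concern is that in case C2 the nodes of a subtree $N[u_\alpha']$ may look as though they are examined once when we test displaying and then again later as part of some enclosing subnetwork. To handle this, I would argue that after each invocation of a {\sc Dissolve} routine on $s_j$, the subnetwork $N[s_j]$ has been contracted into a single labeled leaf, so in all subsequent iterations the nodes of $N[s_j]$ no longer exist. The only nodes that can be touched more than once are the leaves in $L_2(\cdot)$ and their reticulation parents, which are shared between at most two adjacent subnetworks and thus counted at most twice. Consequently the total work across all inner iterations of step 2 for a fixed component $C_i$ is $O(|\mathcal{V}(N[\rho_i])|)$, and summing over $i = 0, 1, \ldots, q$ (the regions $N[\rho_i] \setminus \bigcup_{j > i} N[\rho_j]$ partition $\mathcal{V}(N)$ up to shared split reticulations) yields $O(|\mathcal{V}(N)|) = O(n)$. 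Combined with the pre-processing costs, this gives the claimed linear running time.
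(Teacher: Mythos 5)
Your proposal follows essentially the same route as the paper: correctness is delegated to \Cref{prop_treepath} and the three case propositions applied along the post-ordered split nodes of each successive lowest component, and the running time is obtained by the same subroutine-by-subroutine accounting (aborting traversals of $G[x]$ once they exceed $|\mathcal{V}(N[s])|$, charging each dissolved subnetwork to its own nodes with only $L_2$ leaves and their reticulation parents counted twice) combined with the cited $O(n)$ bound on $|\mathcal{V}(N)|$. The argument is correct and matches the paper's, with your explicit partition of $\mathcal{V}(N)$ across components merely making the final summation slightly more formal than the paper's prose.
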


\section{Conclusion}

We obtain a linear time algorithm for solving TCP for binary reticulation-visible networks, by
utilizing the fact that nodes with special properties as in Section 4 have simple structure below them.
The method is not limited for reticulation-visible networks; it can also be applied to any binary network in general, as long as there are nodes satisfying the special properties.

\bibliographystyle{plainnat}

\bibliography{sample}

\end{document}